\newtheorem{theorem}{Theorem} [section]
\begin{document}


\begin{frontmatter}

\title{Optimal Lockdown Policy for Covid-19: \\A Modelling Study}

\author{Yuting Fu \qquad Haitao Xiang \qquad Hanqing Jin \qquad Ning Wang}
\address{Mathematical Institute, University of Oxford, Oxford, UK}

\begin{abstract}

As the COVID-19 spreads across the world, prevention measures are becoming the essential weapons to combat against the pandemic in the period of crisis. The lockdown measure is the most controversial one as it imposes an overwhelming impact on our economy and society. Especially when and how to enforce the lockdown measures are the most challenging questions considering both economic and epidemiological costs. In this paper, we extend the classic SIR model to find optimal decision making to balance between economy and people's health during the outbreak of COVID-19. In our model, we intend to solve a two phases optimisation problem: policymakers control the lockdown rate to maximise the overall welfare of the society; people in different health statuses take different decisions on their working hours and consumption to maximise their utility. We develop a novel method to estimate parameters for the model through various additional sources of data. We use the Cournot equilibrium to model people's behaviour and also consider the cost of death in order to leverage between economic and epidemic costs. The analysis of simulation results provides scientific suggestions for policymakers to make critical decisions on when to start the lockdown and how strong it should be during the whole period of the outbreak. Although the model is originally proposed for the COVID-19 pandemic,  it can be generalised to address similar problems to control the outbreak of other infectious diseases with the lockdown measures. 

\end{abstract}

\begin{keyword}
COVID-19\sep Equilibrium\sep SIR \sep Lockdown\sep Optimal Control
\end{keyword}

\end{frontmatter}





\section{Introduction}  

As one of the most devastating pandemics in human history, the current outbreak of COVID-19 has already caused more than one million deaths around the globe. Numerous prevention measures have been studied by \cite{flaxman2020estimating} and \cite{tian2020investigation} in order to control the spread of the virus by the governments. For example, medical measures, such as research on testing, medicine, and vaccine, are accelerated; relatively easy measures, like face masking and social distancing, are also widely accepted and applied. Essentially the most effective prevention of COVID-19 is the lockdown measure which completely ceases the movement of the human being and thus slows down the spread of disease. However, the lockdown measure is incredibly controversial as it imposes a tremendous impact on our society and economy. Hence it might be the most difficult decision to be made by the governments. Especially when and how to impose the lockdown measure is one of the most challenging questions for both politicians and scientists. To address this question, there is a need to develop a mathematical model combining both epidemiology and economics. 

Epidemiological models have been widely studied to analyse the dynamics of the pandemic  (\cite{kucharski2020early} \cite{liu2020reproductive} \cite{wang2020evaluation} \cite{wu2020nowcasting}). However, there is less discussion on how the lockdown policies can influence the economic decisions of people and the spread of disease and how can policymakers make optimal policy in the epidemic. \cite{ferguson2020impact} and \cite{ferguson2005strategies} analyse the government intervention using epidemiological models with exogenous parameters and evaluate the effect of the intervention by simulation results. Some recent papers focus on analysis of optimal policy and policy effect in the framework of the SIR model or its variants. They studied the effect of different measures including fiscal policy  (\cite{eichenbaum2020macroeconomics},  \cite{faria2020fiscal}, \cite{guerrieri2020macroeconomic}), testing and quarantine (\cite{piguillem2020optimal}, \cite{berger2020seir},
 \cite{jenny2020dynamic}, \cite{aleta2020modelling}), intervention policy on multi-aged groups (\cite{brotherhood2020economic}, \cite{acemoglu2020optimal}, \cite{gollier2020cost}), social distancing  ( \cite{jones2020optimal}, \cite{farboodi2020internal}, \cite{gourinchas2020flattening}) and lockdown control (\cite{alvarez2020simple},  \cite{gonzalez2020optimal},  \cite{acemoglu2020optimal}).  In previous works,  \cite{alvarez2020simple} studied the optimal lockdown policy that minimises the value of fatalities and the output costs of the lockdown policy by locking down part of the susceptible and Infectious population,  \cite{acemoglu2020optimal} researched the optimal lockdown policies on people of different age groups, and  \cite{gonzalez2020optimal} maximise the economic activity level with the burden of the health-care system. 

We extend the classic SIR model (\cite{harko2014exact}, \cite{kermack1927contribution}) and incorporate an equilibrium framework to study the optimal lockdown policy during the pandemic period.  What we innovate from previous works is that they all only took the governments' perspective but did not take people's own reaction to the pandemic and the government policy into consideration, while we adopt the extension to the SIR model from  \cite{eichenbaum2020macroeconomics} by involving people's economic decision making (consumption and working hours) and embed the SIR model in a simple Cournot equilibrium framework to model people's reaction to each other. Different from  \cite{eichenbaum2020macroeconomics} that studied the optimal containment policy by controlling the tax rate, we control the level of lockdown, which is more direct and effective for the governments, especially in the early stage of the pandemic. Furthermore, we emphases the cost of death in our model objective of policymakers, which is an important factor in real-world government decision making. Using this method, we can enable the lockdown policy to identify a balance between the impact of the epidemics on the economy and people's health.  

The motivation for this work is to address the following questions that the policymakers may face in reality. The main findings are shown as the short answer to these questions.

\begin{itemize}
    \item What difference does the optimal control make on the economic and health outcome of the epidemic compare to no control? We find that optimal lockdown measures could significantly reduce the deaths and infections caused by the epidemics. Although there is a short-term recession with lockdown control, it has better long-term economic outcomes than doing no control. 
    \item How does the timing of starting and ending affect the optimal lockdown control itself as well as its economic and health consequences? Our results suggest that both the timing of starting and ending the lockdown control policy makes a difference in terms of both the economic and epidemic outcomes. It is best to start the control as early as possible, and it is more important to avoid ending the control too early. 
    \item How does the cost of death affect the lockdown control policy and the outcomes? Whether policymakers regard the deaths as a negative influence on society lead to different results. Regarding deaths as negative results in stricter lockdown control policy which leads to a much better epidemic and slightly worse economic outcomes. 
    \item What if policymakers have additional information on people's health status?  Additional information about the health status of people is beneficial, as the optimal separate control on people in different health status will reach much better economic and epidemic outcomes.

\end{itemize}

This paper continues as follows, in section 2, we describe the SIR-Lockdown Model and analyse its properties. In section 3, we discuss the parameter estimation of the model. We present the numerical results of the SIR-Lockdown model in section 4. Section 5 makes conclusions.

\section{Model}

In this section, we first describe the extension to the canonical SIR model. Then analyse the behaviour of susceptible, infectious, and recovered people in regard to their decisions on consumption and working hours under lockdown regulator and formulate the optimal control problem. Finally, we add the cost of death in our model objective.

\subsection{Extension of SIR}  
\label{section:SIR}

As shown in the classic SIR model  (\cite{kermack1927contribution} \cite{hethcote1989three}),
we classify people into three categories according to \cite{harko2014exact}: 
\begin{itemize}
    \item Infectious  (I) are those who are tested positive to the virus;  
    \item Recovered  (R) are those who have been tested positive to the virus and now recovered;  
    \item Susceptible  (S) are those who have not been tested positive to the virus. 
\end{itemize}

 We assume that all susceptible people are subjects to be infected with some possibility in direct contact with infectious people, and infectious people will recover with a constant probability of $\pi_{r}$ or become dead with another constant probability $\pi_{d}$. Our extension is on the infection. All infection happens via direct contact between susceptible people and infected ones into three types of activities: purchasing and/or consumption of goods and services, working with other people,  and other daily activities. A Lockdown policy can be applied to control the working contact, hence change the income flow, which indirectly imposes constraints on the purchasing and consumption. 

We use the following equation  (1—5) to describe 
our extended SIR model for the transition 
among Susceptible, Infected, Recover, and 
the death outcome. 
\begin{eqnarray}
T_{t}&=&\pi_{s 1}\left (S_{t} C_{t}^{s}\right)\left (I_{t} C_{t}^{i}\right)+\pi_{s 2}\left (S_{t} N_{t}^{s}\right)\left (I_{t} N_{t}^{i}\right)+\pi_{s 3} S_{t} I_{t},   \label{eq1} \\
  S_{t+1}&=&S_t-T_t, \label{eq2} \\
 I_{t+1}&=&I_t+T_t- (\pi_r+\pi_d)I_t, \label{eq3} \\
    R_{t+1}&=&R_t+\pi_r I_t, \label{eq4} \\
D_{t+1}&=&D_t+\pi_dI_t. \label{eq5}
\end{eqnarray}

In this system of equations,  $S_t, I_t, R_t$ 
and $D_t$ represents the number of people 
in categories of Susceptible, Infectious, 
Recovery and Death respectively at time $t$.
We use $ (C^s_t, N^s_t)$ to model the  (average) consumption 
behaviour and working hours of susceptible people, 
$ (C^i_t, N^i_t)$ to model the  (average) consumption behaviour
 and working hours of infectious people, and 
$ (C^r_t, N^r_t)$ to model the  (average) consumption behaviour 
and working hours of a recovered people.  $T_t$ in equation  (\ref{eq1})is the number of 
newly infectious people in the time period $t$ to $t+1$ and the three terms in the right-hand side of this equation are used to describe the infection by the three different contact between susceptible people and infectious people via consumption, working, and other types of contact.  

We use several constant parameters to describe the transition rate between different categories.
$\pi_{s1}$ reflects the transition rate for a susceptible people get infected by infectious people from direct contact via purchasing/consuming.
Similarly, $\pi_{s2}$ reflects the transition rate from direct contact via working, 
and $\pi_{s3}$ reflects the transition rate from other contacts. 

Denote $\Delta Y_t=Y_{t+1}-Y_t$ for $Y=S, I, R$, then the dynamics of the SIR model is 
\begin{eqnarray*}
\Delta S_t&=&-T_t,\\
\Delta I_t&=&T_t- (\pi_r+\pi_d)I_t\\
\Delta R_t&=& \pi_d I_t. 
\end{eqnarray*}

We use vectors and matrices to 
simplify our presentation. Denote $X_t= (S_t, I_t, R_t)^\top, \, C_t= (C_t^s, C_t^i, C_t^r)^\top, n_t= (n^s_t, n^i_t, n^r_t)^\top$, and for any $x= (x_1, x_2, x_3)^\top, c= (c_1, c_2, c_3)^\top, n= (n_1, n_2, n_3)^\top$, define 
\begin{equation}
    T (x, c, n)=x_1x_2\left (\pi_{s 1} c_{1} c_{2}+\pi_{s 2}n_{1} n_{2}+\pi_{s 3} \right)
\end{equation}
\begin{equation}
    F (x,c,n)= (-T (x,c,n), T (x,c,n)- (\pi_r+\pi_d)x_2,\pi_d x_2)^\top
\end{equation}
then the system can be described as 
\begin{equation}
    \Delta X_t=F (X_t, C_t, n_t).
\end{equation}

\subsection{Behaviour of individuals in different categories}

We study the rational behaviour of all people 
who maximise their own welfare by choosing proper 
consumption and working hours like in a normal 
time, i.e., the virus does not change people's 
rationality and preference.  Also, we use the following utility 
function to model the utility from consumption and working
of an individual,
\begin{equation}
u (c,n) = \ln c-\frac{\theta}{2}n^2    
\end{equation}
where $c$ is the consumption, and $n$ is the working 
hours. The first term measures the utility from consumption, and the second term measures the utility from working. 
Denote by $A$ the average wage per hour of a person, hence the labor income 
of an individual, with working hour $n$ is $A*n$, which will be the upper bound of the consumption, i.e. $An\ge c$. 

Denote by $n_0$ the full working hours in a unit time before the spread of the virus, which is officially guided by the government. It is natural that $n_0$ is set optimally for the society, and the optimality brings some information of the parameter $\theta_0$. 
If a person follows the full working hours $n_0$ optimally, then 
her labor income will be $An_0$. 
Since the utility function is strictly increasing in the consumption, all labor income should be consumed up, hence the optimal consumption $c_0$ should be 
Then by the optimality of $n_0$, 
we have $\frac{\partial u (c_0,n_0)}{\partial n}=\frac{1}{n_0}-\theta n_0=0$, by which 
we will choose $\theta$ by 
$$\theta=1/n_0^2.$$

The total utility of a flow of consumption and working 
hours $\{ (c_{\tau}, n_{\tau})\}_{\tau=t,\cdots,T}$ is defined by 
\begin{equation}
U (c_\cdot, n_{\cdot})=\sum_{t=\tau}^{T} \beta^{\tau} u\left (c_{\tau}, n_{\tau}\right)    
\end{equation}

To contain the spreading of the virus, governments need to apply a lockdown policy to reduce direct contacts between people, which will impose stricter
constraints on their behaviour.  In this paper, we study the lockdown policy by a constraint on the ratio $L\in [0, 1]$ of the working hour in the full working capacity, i.e., given the full working hours $n_0$, the maximal working hour cannot exceed $n_0*L$. We suppose the government cannot easily identify individuals into their categories so that the lockdown constraint on the working hours is the same for all people. We formulate the decision making problem for each category with a given lockdown policy $L_\cdot$, and then study the lockdown policy-making problem for the government.

\subsubsection{Optimal decision of recovered people}
Suppose the lockdown measure $L_t\in [0,1]$ is given for any time $t$. 

A recovered people aims at maximising his total utility 
\begin{equation}
    J^r (c^r_\cdot, n^r_\cdot; t)=\sum_{\tau=t}^T \beta^{\tau-t} u (c^r_\tau, n^r_\tau)
\end{equation}
with the constraint $c_\tau^r\le A n_\tau^r$ and $n^r_\tau\le n_0 L_\tau$. 

\begin{theorem}
At time $t$ with state $X_t$ and the lockdown policy $\{L_\tau: \tau\in  [t, T]\}$, 
the optimal $ (c^r, n^r)$ is 
$$c^{r*}_\tau=A n_0 L_\tau, n^{r*}_\tau=n_0 L_\tau, \quad \tau=t, \cdots, T.$$
\end{theorem}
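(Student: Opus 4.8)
The plan is to show that a recovered individual's optimization decouples across time periods, so that it suffices to maximize the per-period utility $u(c^r_\tau, n^r_\tau)$ subject to the two constraints $c^r_\tau \le A n^r_\tau$ and $n^r_\tau \le n_0 L_\tau$. The key observation is that for a recovered person there is no epidemiological state dynamics to worry about: the constraints at time $\tau$ involve only $(c^r_\tau, n^r_\tau)$, and the objective $\sum_{\tau=t}^T \beta^{\tau-t} u(c^r_\tau, n^r_\tau)$ is additively separable with strictly positive weights $\beta^{\tau-t}$. Hence the joint maximum is attained by maximizing each summand independently, and I would state this reduction explicitly as the first step.

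Next I would solve the single-period problem $\max_{c,n} \{\ln c - \frac{\theta}{2} n^2\}$ subject to $c \le An$ and $n \le n_0 L_\tau$. First, since $u$ is strictly increasing in $c$ and $c$ enters no other constraint, the budget constraint binds: $c = An$ at the optimum (if $c < An$ we could increase $c$ slightly, raising utility). Substituting, the problem becomes $\max_{n \le n_0 L_\tau} \{\ln(An) - \frac{\theta}{2} n^2\} = \max_{n \le n_0 L_\tau}\{\ln A + \ln n - \frac{\theta}{2} n^2\}$. The function $g(n) = \ln n - \frac{\theta}{2} n^2$ has derivative $g'(n) = \frac1n - \theta n$, which is positive for $n < 1/\sqrt{\theta}$ and negative for $n > 1/\sqrt\theta$; so the unconstrained maximizer is $n = 1/\sqrt\theta = n_0$ by the earlier calibration $\theta = 1/n_0^2$. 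Since $L_\tau \le 1$, we have $n_0 L_\tau \le n_0$, so the constraint $n \le n_0 L_\tau$ cuts off the domain strictly below (or at) the unconstrained optimum, on the increasing part of $g$. Therefore the constrained maximizer is the boundary point $n^{r*}_\tau = n_0 L_\tau$, and correspondingly $c^{r*}_\tau = A n_0 L_\tau$.

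Finally I would assemble the pieces: since the per-period maximizers $(A n_0 L_\tau, n_0 L_\tau)$ are each feasible for the full problem and jointly maximize the separable objective, the stated flow is optimal. I would also note the degenerate case $L_\tau = 0$ (then $c = 0$, $\ln c = -\infty$), which is a genuine edge case for the $\ln$ utility but is handled by convention or by assuming $L_\tau > 0$; the paper likely intends $L_\tau \in (0,1]$ in practice. The main obstacle, such as it is, is not analytical difficulty but making the separability argument airtight — i.e., being careful that nothing couples the recovered person's periods (no carryover of savings, no state variable), which is immediate from the problem formulation as written, and checking that $g$ is indeed increasing on the whole admissible interval $[0, n_0 L_\tau]$ rather than merely having an interior critical point. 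Everything else is a short monotonicity-and-concavity computation.
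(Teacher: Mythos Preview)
Your proposal is correct and follows essentially the same route as the paper: first use monotonicity in $c$ to conclude the budget constraint binds, then use the calibration $\theta = 1/n_0^2$ to show the working-hours constraint $n \le n_0 L_\tau$ binds. The only cosmetic difference is that the paper packages the second step via KKT multipliers (showing $\lambda_{n\tau}^r = \beta^{\tau-t}(1/n_\tau^r - \theta n_\tau^r) > 0$), whereas you argue directly that $g(n)=\ln n - \tfrac{\theta}{2}n^2$ is increasing on $[0,n_0]$; the underlying inequality is identical.
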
  

\begin{proof}
Since $\frac{\partial J^r (c^r,n^r;t)}{\partial c_{\tau}^r}=\beta^{\tau-t}\frac{1}{c_{\tau}^r}>0$, we have $c_{\tau}^{r*}=An_{\tau}^r$ $\forall \tau=t,...,T$. Denote $f (c^r,n^r,\lambda_n^r;t)=J^r (c^r,n^r;t)+\sum_{\tau=t}^T\lambda_{n\tau}^r (n_0L_{\tau}-n_{\tau}^r)$. Then by KKT condition, $\forall \tau=t,...,T$,  
$$\frac{\partial f (c^{r*},n^r,\lambda_n^r;t) }{\partial n_{\tau}^r}=0\Rightarrow \beta^{\tau-t} (-\theta n_{\tau}^r+\frac{1}{n_{\tau}^r})-\lambda_{n\tau}^r=0 $$
$$\lambda_{n\tau}^r (n_0L_{\tau}-n_{\tau}^r)=0,\lambda_{n\tau}^r\ge 0$$
Since $n_0^2\theta=1$, $$\lambda_{n\tau}^r=\beta^{\tau-t} (-\theta n_{\tau}^r+\frac{1}{n_{\tau}^r})>\beta^{\tau-t} (-\theta n_0+\frac{1}{n_0})=0$$
Thus $n^{r*}_\tau=n_0 L_\tau, c_{\tau}^{r*}=An_0L_{\tau}\forall \tau=t,...,T$
\end{proof}

Notice that the behaviour of recovered people $ (c^r_\cdot, n^r_\cdot)$ plays no role in the spread of the virus, hence the behaviour of recovered people will not affect people in other categories. This is why we start to form this easy-to-handle category.

\subsubsection{Optimal behaviour of infectious people}
Similar to the case of recovered people, infectious people also need to choose their optimal consumption and working hours $\{ (c^s_t, n_t^s)\}_{t=0,1,\cdots,T}$ to maximise their total utility from consumption and working hour, 
subject to the constraint that the consumption $c^s_t$ cannot exceed the labour income for the working hour $n_t$, and $n_t$ must be no more than 
the lockdown policy $n_0*L_t$.

The labor income of an infectious people is different from other categories. 
Because they are infected, their health condition is usually worse than other people. So we introduce a constant $\phi$ to discount their working efficiency, and the labor income from $n_t$ working hour will be $A*\phi*n$. Furthermore, since
an infectious people will have a constant probability $\pi_r$ to recover and suffer a possibility $\pi_d$ of death, we need to calculate the distribution over all categories at a future time. 
For an infectious people at time $t$,  he
has the probability $\pi_r$ to recover in the next unit time, $\pi_d$ to die, and the rest probability $1-\pi_r-\pi_d$ to stay in the infected category. 
By this evolution, we can get the conditional probabilities for his health state at a future time $\tau>t$. Denote by $p^{i, i} (t,\tau)$ the probability  for him being  still  infected, $p^{i, r} (t,\tau)$ the probability being recovered, and $p^{i, d} (t,\tau)$
the probability of being dead. Then we can deduce that 
\begin{eqnarray}
p^{i, i} (t,\tau)&=& (1-\pi_r-\pi_d)^{\tau-t},\label{pititau}\\
p^{i, r} (t,\tau)&=&\pi_r\frac{ (1- (1-\pi_r-\pi_d)^{\tau-t})}{\pi_r+\pi_d}, \label{pitrtau}\\
p^{i, d} (t,\tau)&=&\pi_d\frac{ (1- (1-\pi_r-\pi_d)^{\tau-t})}{\pi_r+\pi_d}. \label{pitdtau}
\end{eqnarray}

If he recovered, he should behave optimally as 
a recovered people, while if death has happened  unfortunately, we cease the accumulation of any utility.  So, for a given flow $ (c^i_\cdot, n^i_\cdot)$ of consumption and working hours taken by the infectious people from time $t$, the accumulated utility he can get will be  
\begin{equation}
\begin{aligned}
    J^i (c^i_\cdot, n^i_\cdot; t)=\sum_{\tau=t}^T \beta^{\tau-t} \left[ p^{i,i} (t,\tau) u (c^i_\tau, n^i_\tau)
    -p^{i,r} (t,\tau) u (c^{r*}_\tau, n^{r*}_\tau)\right].
\end{aligned}
\end{equation}
where $ (c^{r*}, n^{r*})$ is the optimal behaviour of a recovered people determined in the previous case, and $p^{i,i}, p^{i,r}$ are as defined in equation  (\ref{pititau}, \ref{pitrtau}). 

\begin{theorem}
Given the lockdown policy $L_\cdot$, 
the optimal $ (c^r, n^r)$ is 
\begin{equation}
    c^{i*}_\tau=A \phi n_0 L_\tau, n^{i*}_\tau=n_0 L_\tau, \quad \tau=t, \cdots, T.
\end{equation}
\end{theorem}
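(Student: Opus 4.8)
The plan is to reduce this to the same single‑period concave maximization that appeared in the proof for recovered people. First I would note that in the objective
\[
J^i(c^i_\cdot, n^i_\cdot; t)=\sum_{\tau=t}^T \beta^{\tau-t}\left[p^{i,i}(t,\tau)\,u(c^i_\tau,n^i_\tau)-p^{i,r}(t,\tau)\,u(c^{r*}_\tau,n^{r*}_\tau)\right],
\]
the term $-p^{i,r}(t,\tau)\,u(c^{r*}_\tau,n^{r*}_\tau)$ is fixed once the lockdown policy and the already‑solved recovered behaviour are given; it carries no dependence on the infectious person's controls. Hence maximizing $J^i$ over admissible $(c^i_\cdot,n^i_\cdot)$ is equivalent to maximizing $\sum_{\tau=t}^T \beta^{\tau-t} p^{i,i}(t,\tau)\,u(c^i_\tau,n^i_\tau)$ subject to $c^i_\tau\le A\phi n^i_\tau$ and $0\le n^i_\tau\le n_0 L_\tau$.

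Next, since $\partial u/\partial c=1/c>0$ and $p^{i,i}(t,\tau)=(1-\pi_r-\pi_d)^{\tau-t}>0$, the income constraint must bind at the optimum, giving $c^{i*}_\tau=A\phi n^{i*}_\tau$ for every $\tau$. Substituting this in, the problem decouples across time into the family of scalar problems
\[
\max_{0\le n\le n_0 L_\tau} g_\tau(n),\qquad g_\tau(n):=\ln(A\phi n)-\tfrac{\theta}{2}n^2,
\]
each carrying the strictly positive weight $\beta^{\tau-t}p^{i,i}(t,\tau)$, so the joint maximizer is obtained by maximizing each $g_\tau$ individually. I would then analyse $g_\tau$ exactly as in the recovered case: $g_\tau'(n)=1/n-\theta n$ is strictly positive for $n<1/\sqrt{\theta}=n_0$ (using $\theta=1/n_0^2$) and negative for $n>n_0$; since $L_\tau\in[0,1]$, the feasible interval $[0,n_0 L_\tau]$ sits inside $[0,n_0]$, so $g_\tau$ is increasing throughout and its maximum is at the right endpoint $n^{i*}_\tau=n_0 L_\tau$, whence $c^{i*}_\tau=A\phi n_0 L_\tau$. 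This can be packaged in KKT form with a multiplier $\lambda^i_{n\tau}$ for $n^i_\tau\le n_0 L_\tau$, mirroring the earlier proof, and checking $\lambda^i_{n\tau}=\beta^{\tau-t}p^{i,i}(t,\tau)\big(1/n^i_\tau-\theta n^i_\tau\big)>0$ at $n^i_\tau=n_0 L_\tau\le n_0$.

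There is no genuine obstacle here; the argument is a routine variant of the recovered‑people theorem. The only points needing a word of care are (i) explicitly dropping the control‑independent $p^{i,r}$ term before optimizing, (ii) recording the positivity of the weights $\beta^{\tau-t}p^{i,i}(t,\tau)$ so the time‑separated problems may be solved independently, and (iii) the degenerate case $L_\tau=0$, where the constraint forces $n^i_\tau=c^i_\tau=0$ and the log utility is $-\infty$; this is irrelevant for the policy problem and is simply excluded by taking $L_\tau>0$.
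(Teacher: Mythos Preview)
Your proposal is correct and follows essentially the same route as the paper: after noting that the $p^{i,r}$ term is control-independent and that the positive weight $\beta^{\tau-t}p^{i,i}(t,\tau)=(\beta(1-\pi_r-\pi_d))^{\tau-t}$ lets the problem separate across time, you bind the budget constraint by monotonicity in $c$ and then show the working-hours constraint binds, which the paper does via the same KKT computation you describe. The only cosmetic differences are that you make the removal of the $p^{i,r}$ term and the time-separability explicit and add a remark on the degenerate case $L_\tau=0$.
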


\begin{proof}
Notice $\frac{\partial J^i (c^i,n^i;t)}{\partial c_{\tau}^i}= (\beta (1-\pi_r-\pi_d))^{\tau-t}\frac{1}{c_{\tau}^i}>0$, we have $c_{\tau}^{i*}=A\phi n_{\tau}^i$ $\forall \tau=t,...,T$. Denote $f (c^i,n^i,\lambda_n^i;t)=J^i (c^i,n^i;t)+\sum_{\tau=t}^T\lambda_{n\tau}^i (n_0L_{\tau}-n_{\tau}^i)$. Then by KKT condition, $\forall \tau=t,...,T$,  
$$\frac{\partial f (c^{i*},n^i,\lambda^i;t) }{\partial n_{\tau}^i}=0\Rightarrow  (\beta (1-\pi_r-\pi_d))^{\tau-t} (-\theta n_{\tau}^i+\frac{1}{n_{\tau}^i})-\lambda_{n\tau}^i=0 $$
$$\lambda_{n\tau}^i (n_0L_{\tau}-n_{\tau}^i)=0,\lambda_{n\tau}^i\ge 0$$
Since $n_0^2\theta=1$, $$\lambda_{n\tau}^i= (\beta (1-\pi_r-\pi_d))^{\tau-t} (-\theta n_{\tau}^i+\frac{1}{n_{\tau}^i})> (\beta (1-\pi_r-\pi_d))^{\tau-t} (-\theta n_0+\frac{1}{n_0})=0$$
Thus $n^{i*}_\tau=n_0 L_\tau,c_{\tau}^{i*}=An_0\phi L_{\tau} \forall \tau=t,...,T$

\end{proof}

Different from the infectious case,  the behaviour of an infectious people $ (c^i, n^i)$ is involved in our extend SIR model for the spreading of the virus, 
hence they will make the decision problem for susceptible people much harder.

\subsubsection{Behaviour of susceptible people }

The decision planning for a susceptible people from time $t$ is much more complicated if we consider the possibilities for this people to turn into infectious, recovered, and death at different future time spots. We avoid the complexity by taking advantage of the optimal value function for an infected, and model the objective function of a susceptible people recursively. 

As for the previous two categories, we start from time $t$ and pick up a susceptible person. Denote the state of the SIR model at the starting time as $X_t$, and the lockdown policy is fully given as $L_\cdot)$. 

 Suppose he will  follow a given flow of consumption and working hours $ (c^s_\tau, n^s_\tau)_{\tau=t, t+1,\cdots, T}$ before being infected, and then follow the optimal behaviour  after been infected, i.e., his consumption and working hours after infected will switch to the optimal control for an infected person from the infection time. We denote his objective value as 
 \begin{eqnarray}
 J^s (c^s_\cdot, n^s_\cdot; t, X_t, L_\cdot)&=&u (c^s_t, n^s_t)+\beta \tau_t J^{i*} (t+1, L_\cdot) \nonumber \\
 &&+\beta (1-\tau_t)J^s (c^s_\cdot, n^s_\cdot; t+1, X_{t+1}, L_\cdot),\label{s-recursive}\\
 J^s (c^s_\cdot, n^s_\cdot; T, X_T, L_\cdot)&=&u (c^s_T, n^s_T),
 \end{eqnarray}
 where $\tau_t= \pi_{s1} n_0A \phi I_t L_t c^s_t+\pi_{s2} n_0 I_t L_tn^s_t+\pi_{s3}I_t$ is the probability of a 
 susceptible person to be infected in the next unit time, $J^{i*} (t+1, L_\cdot)$ is the optimal objective value achievable for an infected person starting from time $t+1$, 
 and $X_{t+1}$ is the SIR state at time $t+1$ resulted by people's behaviour $ (c^s_t, n^s_t, c^{i*}_t, n^{i*}_t, c^{r*}_t, n^{r*}_t)$
 and the time $t$ state $X_t$.
 
 Now it is natural that we aim at maximising the objective $J^s (c^s_\cdot, n^s_\cdot; t, X_t, L_\cdot)$ over feasible control flow $ (c^s_\cdot, n^s_\cdot))$, ,i.e., the optimal behaviour of a susceptible people will be the solution for the optimisation 
   \begin{equation}\label{s-optimal}
   \begin{array}{ll}
   \max
   &  J^s (c^s_\cdot, n^s_\cdot; t, X_t, L_\cdot)\\
   s.t.& c_\tau^s\le A n_\tau^s, \quad n^s_\tau\le n_0 L_\tau, \qquad \forall \tau\in \{t, t+1, \cdots, T\}.
   \end{array}
   \end{equation}

\begin{theorem}
At time $t$ with state $X_t$ and the lockdown policy $\{L_\tau: \tau\in  [t, T]\}$, 
the optimal $ (c^s, n^s)$ is 
\begin{equation}
    c^{s*}_\tau=An^{s*}_\tau, \quad \tau=t, \cdots, T.
\end{equation}
\end{theorem}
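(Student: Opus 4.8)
The plan is to establish only what the theorem claims, namely that the income constraint $c^s_\tau\le A n^s_\tau$ binds at every $\tau$; the level of $n^{s*}_\tau$ itself (which need not equal $n_0 L_\tau$, unlike the two previous cases, since the infection hazard gives a reason to work less) is pinned down separately by first-order conditions and is not asserted here. The driving observation is that for a susceptible person the working hours $n^s_\tau$ are \emph{unambiguously costly}: raising $n^s_\tau$ lowers the flow utility through the term $-\tfrac{\theta}{2}(n^s_\tau)^2$, raises the one-step infection hazard $\tau_t$ through the term $\pi_{s2}n_0 I_t L_t n^s_t$ in \eqref{s-recursive}, and raises the number of new infections $T_\tau$, thereby pushing the SIR state from $\tau+1$ onward toward a more infectious configuration. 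Hence the only reason to work is to finance consumption, and at the optimum one works exactly enough to do so.

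I would make this precise by a one-period exchange argument. Suppose $(c^{s*}_\cdot,n^{s*}_\cdot)$ solves \eqref{s-optimal} but $c^{s*}_\tau< A n^{s*}_\tau$ for some $\tau$. Since $u=\ln c-\tfrac{\theta}{2}n^2$ forces $c^{s*}_\tau>0$, the value $n':=c^{s*}_\tau/A$ satisfies $0<n'\le n^{s*}_\tau\le n_0 L_\tau$; replace $n^{s*}_\tau$ by $n'$ and keep all other controls. The perturbed path is feasible (the $\tau$-constraints now read $c^{s*}_\tau=An'$ and $n'\le n_0L_\tau$), so it suffices to show $J^s$ strictly rises. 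Unrolling \eqref{s-recursive} writes $J^s$ as a sum over ``remain susceptible through time $\sigma$, then either get infected at $\sigma$ or survive to $T$'', weighted by survival factors $\prod_{\rho<\sigma}(1-\tau_\rho)$. Against this decomposition the perturbation has three effects: \textbf{(i)} the flow utility at $\tau$ strictly increases by $\tfrac{\theta}{2}\big((n^{s*}_\tau)^2-(n')^2\big)$; \textbf{(ii)} the hazard $\tau_\tau$ weakly decreases, since it is nondecreasing in $n^s_\tau$ and strictly increasing there when $I_\tau>0$ (if $I_\tau=0$ the epidemic is already extinct from $\tau$ on and only (i) survives); \textbf{(iii)} $T_\tau$ weakly decreases, so $X_{\tau+1},X_{\tau+2},\dots$ move to a less infectious state.

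Turning (ii)--(iii) into a sign for $\Delta J^s$ requires two monotonicity inputs. First, a comparison lemma $J^{i*}(s,L_\cdot)\le J^s(c^s_\cdot,n^s_\cdot;s,X,L_\cdot)$ for every feasible $X$ and control, which makes lowering the switching probability $\tau_\tau$ in \eqref{s-recursive} beneficial; this should follow because an infected agent consumes only the discounted amount $A\phi n_0 L$ with $\phi<1$, runs the death hazard $\pi_d>0$, and can never outperform a recovered agent. Second --- and this is the crux --- one needs $J^s(\,\cdot\,;s,X,L_\cdot)$ to be monotone in the state so that a smaller $T_\tau$ cannot hurt: concretely, that a smaller $T_\tau$ yields a pointwise no-larger infected path $\{I_\sigma\}_{\sigma>\tau}$ and that $J^s$ is nonincreasing in each $I_\sigma$. \textbf{The main obstacle is precisely this dynamic step}, because reducing $T_\tau$ simultaneously \emph{raises} $S_{\tau+1}$ while lowering $I_{\tau+1}$ (with $S+I$ conserved along the perturbation), and $J^s$ also reacts to $S$ through the bilinear term $S_tI_t$ in $F$; one must check the net effect on all future hazards is still favourable. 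This is immediate under the atomistic reading, where an individual susceptible's choice does not move the aggregate $X$-path (only the comparison lemma is then needed); under the literal self-consistent reading it calls for a careful backward/forward induction using $S_\sigma+I_\sigma=\text{const}$ together with the smallness of the $\pi_{s\bullet}$, or a mild restriction on the regime. Once (i)--(iii) are signed, the perturbed control strictly dominates, contradicting optimality, so $c^{s*}_\tau=An^{s*}_\tau$ for all $\tau$.
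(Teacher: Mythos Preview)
Your one-sided perturbation (lower $n^s_\tau$ to $c^{s*}_\tau/A$, keep $c^{s*}_\tau$) is not what the paper does, and the difficulty you flag in step~(iii) is real and is precisely what the paper's argument sidesteps. In the paper's formulation the individual's choice $(c^s_t,n^s_t)$ \emph{does} move the aggregate state $X_{t+1}$ (through $T_t=S_t\tau_t$), so your comparison of continuation values under two different state trajectories would indeed require a monotonicity-in-state lemma that you have not proved; the ``atomistic reading'' you invoke to escape this is not the model actually written down.

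The paper avoids the whole issue by perturbing $c$ and $n$ \emph{simultaneously} so that the hazard $\tau_t$ is left unchanged. Since $\tau_t$ is linear and strictly increasing in each of $c^s_t$ and $n^s_t$ (when $I_t>0$), one picks
\[
m=\frac{\pi_{s1}A\phi\,c^{s*}_t+\pi_{s2}\,n^{s*}_t/A^{-1}}{\pi_{s1}A\phi+\pi_{s2}/A}
\quad\text{(a convex combination of }c^{s*}_t\text{ and }A n^{s*}_t\text{)},
\]
sets $c^s_t=m$, $n^s_t=m/A$, and checks $m\in(c^{s*}_t,\,A n^{s*}_t)$. Then $\tau_t$ is exactly $\tau^*_t$, hence $T_t$ and $X_{t+1}$ are unchanged, and all continuation terms in the dynamic programming identity coincide. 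The only difference is the instantaneous utility, which strictly increases because $m>c^{s*}_t$ and $m/A<n^{s*}_t$. No comparison lemma $J^{i*}\le V$ and no state-monotonicity of $V$ are needed. Your approach could in principle be completed, but it is strictly harder; the takeaway is that when the hazard is jointly monotone in two controls, you should slide along a level set of the hazard rather than move only one control and then chase the dynamic consequences.
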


\begin{proof}
We fix the lockdown policy $L_\cdot$ and omit it when no confusion will arise. 

 Denote the value function as $V (t,X_t)=J^s (c^{s*}_\cdot, n^{s*}_\cdot; t,X_t, L_\cdot)$. 
 According to the dynamic programming principle, we know $V$ must satisfy 
 \begin{eqnarray*}
 V (t,X_t)&=&\max_{c_t^s\le A n_t^s, \, n^s_t\le n_0 L_t} \left[ u (c_t^s, n^s_t)+\beta \tau_t J^{i*} (t+1, L_\cdot)
              +\beta (1-\tau_t) V (t+1, X_{t+1})\right]\\
              &=&u (c_t^{s*}, n^{s*}_t)+\beta \tau^*_t J^{i*} (t+1, L_\cdot)
              +\beta (1-\tau^*_t) V (t+1, X^*_{t+1}),
 \end{eqnarray*}
 where $\tau^*_t$ and $X^*_t$ are the corresponding infection probability and time $t+1$ state of the SIR model.
 
 If $c^{s^*}_t<A n^{s^*}_t$, then, due to the strictly increasing properties of  $\tau_t$ in both $c^{s}$ and $n^s$, 
 we can easily find a value $ m\in  ( c^{s^*}_t, A n^{s^*}_t)$, and construct another  control 
 $c^s_t=m$ and $n^s_t=m/A$, such that the corresponding $\tau_t$ will be the same as $\tau^*_t$, hence $X_{t+1}$
 will also be the same as $X^*_t$. But since $c^s_t>c^{s*}_t$ and $n^s_t <n^{s*}_t$, we have $u (c^s_t, n^s_t)>u (c^{s*}_t, n^{s*}_t)$, which contradicts the optimality of $ (c^{s^*}, n^{s^*})$ in the dynamic programming principle.  
 \end{proof}

\subsection{Optimal Control of the Policymaker}  

With the optimal behaviour in each category under a given lockdown policy $L_\cdot$,  we can easily formulate the optimal policy-making problem into an optimal control problem.

Suppose we start the lockdown problem from some time $t_0$ with the contamination state 
$X_{t_0}$ being given by $S_{t_0}=s, I_{t_0}=i$ 
and $R_{t_0}=r$, then the optimal lockdown policy should be the optimal control problem 

\begin{equation}\label{regulator}
\begin{array}{rl}
\max_{L_\cdot}& J^0 (L_\cdot; t,X_t)=\sum_{t=t_0}^T \beta^{t-t_0} \left [ 
 S_t  u (c^{s*}_t , n^{s*}_t )+ I_t  u (c^{i*}_t , n^{i*}_t )+ R_t  u (c^{r*}_t , n^{r*}_t )  \right],\\
 \end{array} 
\end{equation}
where $ (c^{ca*}_t, n^{ca*}_t)$ are the optimal consumption and working hours for people in category $ca$  ($ca$ can be $s, i$ or $r$), which are all determined in previous optimisation problems.


In previous objective $J^0$,  we remove all cases of death. 
In reality, since death of disease causes has a strong negative impact to a household as well as to the society, regulators should not ignore  any death case. We include the 
strong impact of death cases 
by introduce a penalty term into the objective 
 \begin{equation}\label{death}
 J^\lambda (L_.; t, X_t)=\sum_{\tau=t}^T \beta^{\tau-t} \left [ 
 S_\tau u (c^{s*}_\tau, n^{s*}_\tau)+ I_\tau u (c^{i*}_\tau, n^{i*}_\tau)+ R_\tau u (c^{r*}_\tau, n^{r*}_\tau)  -\lambda D_{\tau}u (c^{r*}_\tau, n^{r*}_\tau)\right],
 \end{equation}
In this new objective,  we measure the the cost of a death by a multiple of the optimal utility for a recovered people,  and the multiple $\lambda>0$ can be viewed as the severity of death in the government's view. 
When $\lambda=0$, $J^\lambda$ reduces to our previous objective $J^0$. 

With this new objective, the  problem for a regulator is to solve 
\begin{equation}\label{regulator1}
\begin{array}{rl}
\max_{L_\cdot}& J^\lambda (L_\cdot; t,X_t),\\
s.t. & L_t\in [0,1] \;\; \forall t\in [0,T].
 \end{array} 
 \end{equation}

\subsection{Solving Scheme}
In Problem  (\ref{regulator1}), or its reduced version  (\ref{regulator}), the optimal decisions of individuals in all three categories are involved. Fortunately,
the optimal decisions of recovered and infectious people are trivial due to our good structure of the model, which leaves us to tackle the optimal decision problem  (\ref{s-optimal}) for susceptible people before the Problem  (\ref{regulator1}). 

We start our solving scheme by tackling the Problem  (\ref{s-optimal}) with a given lockdown policy $L_\cdot$. Because of the lockdown constraint, 
it is almost hopeless for us to get an explicit solution. We solve this optimal control problem numerically in the same was as in  \cite{eichenbaum2020macroeconomics}.
In this approach,  the optimal control at each time step is regarded as the static optimisation with two constraints from the consumption budget and the lockdown policy on the working hours, and solutions are obtained by solving the corresponding KKT condition\footnote{In fact, when we use the numerical scheme proposed in  \cite{eichenbaum2020macroeconomics}
to our problem, the derivative
used in the KKT condition is not correct due to the absence of a complicated term from the term in equation  (\ref{s-recursive}).  We decide to ignore this absence due to the following two reasons:  (1) if we recover this complicated term, the calculation will be extremely complicated;  (2) from real data in the COVID-19 pandemic, we know the coefficient in the third term $\beta \tau_t)$ is very close to $0$, which is also observed in our numerical results. 
}.  

With the optimal control $ (c^{s*}_\cdot, n^{s^*})$ as functions of the lockdown policy $L_\cdot$, we deal with the optimal control problem  (\ref{regulator1}) 
as an optimisation over the high dimension space $[0,1]^T$ by the gradient-based interior-point method used in the Matlab function {\tt fmincon}. Although we have no theoretical proof on the convergence of our scheme, 
our numerical results show the convergence of our scheme. 

Parts of our code in our scheme are from  \cite{eichenbaum2020macroeconomics}.

\section{Model Parameters}  

In this section, we study how to estimate those parameters in our model from real data, and apply it in an  
example with COVID-19 data in the UK to get the numerical results for optimal lockdown control.

In our model, we have quite a lot of parameters, and some of them are well-estimated and available from different sources. 
Let us start from easily accessible ones. 

For the extended SIR model, without loss of generality, we standardise  the total population to $N=1$, which makes $S_t, I_t, R_t$ and $D_t$ be the 
proportions of the population of each category in the total population. 

The unit of a time step is not an essential parameter,  we can simply count the time by weeks. 

$\pi_r$ and $\pi_d$ in the extended SIR model can be easily estimated from historical data, which have been done in several data sources like HPCC covid19 data cluster \footnotemark
. In our example, we will use the estimation from  \cite{eichenbaum2020macroeconomics}. 

$\pi_{s1}, \pi_{s2}$ and $\pi_{s3}$ are complicated to estimate, and we defer the discussion to after all easy ones. 
 
For the characterisation of the decision making for individuals, we still need parameters
$n_0, \theta, \beta, A$, and $\phi$. Most of them are quite flexible, and in our examples, we do not estimate them from real data but specify their values in the same way as in different literature. 
We will do it in our detailed example. 

Finally, let us focus on the estimation of $\pi_{s_1},\pi_{s_2},\pi_{s_3}$. 
At any time $t$, we have $\pi_{s1}c_t^sc_t^i+\pi_{s2}n_t^sn_t^i+\pi_{s3}=\pi_{t}$, where $\pi_{t}$ is the transmission rate in classic SIR model. Similar to $\pi_r$ and 
$\pi_d$, the quantity $\pi_t$ is also available in different data source like HPCC covid-19 data cluster\footnotemark[\value{footnote}]
\footnotetext{HPCC systems covid19: \url{https://covid19.hpccsystems.com/}}.
To estimate $\pi_{s1},\pi_{s2},\pi_{s3}$, we choose two different time spots $t_1$ and $t_2$. The first time spot  $t_1$ can be any time between  the onset of the spreading of the virus and the first lockdown measure, and the second time spot $t_2$ must be in a period where  a lockdown measure was applied. With the observation of $\pi_{t_1}$ and $\pi_{t_2}$, 
 we have: 

$$\pi_{s1}A^2n_0^2+\pi_{s2}n_0^2+\pi_{s3}=\pi_{t_1}, \quad  
\pi_{s1}A^2n_0^2L_{t_2}^2+\pi_{s2}n_0^2L_{t_2}^2+\pi_{s3}=\pi_{t_2},$$  
where $L_t$ is an estimation of actual lockdown rate at time $t$. 
These two equations are not enough to give us the values of three parameters, we still need one more equation for the purpose. 
In the case  (as happened in the UK)  that no different  (non-null) lockdown measures have been applied, the third equation is officially not available. So we assume that 
$$\pi_{s2}n_0^2\times \frac{1}{3}\times\frac{1}{6}=\pi_{s3}.$$ 
This equation is from the assumption that susceptible people spend about $1/3$ of their working hours for other activities related to other types of direct contact, 
and infectious people spend about half the time of susceptible ones in this type of activity due to the poor health condition.  The two proportions $1/3$ and $1/6$ can be adjusted based on personal experience. 


These three equations can give us a good estimation of $\pi_{s1},\pi_{s2}$ and $\pi_{s3}$.

\subsection{Parameter estimation: an example}
We take the COVID-19 in the UK as our example, which 
started in the year 2019. The only lockdown took place on 23 March 2020 and lifted up in July 2020. 

For the estimation of $\pi_{s1},\pi_{s2}$ and $\pi_{s3}$, we need to specify some other parameters. 

According to the starting of the epidemic and lockdown,
we take $t_1$ to be a time in Jan 2020 
and $t_2$ to be some time in April 2020.  

The government released Experimental results of the pilot Office for National Statistics  (ONS) online time-use study  (collected 28 March to 26 April 2020 across Great Britain) \footnote{ONS Dataset \url{https://www.ons.gov.uk/economy/nationalaccounts/satelliteaccounts/datasets/coronavirusandhowpeoplespenttheirtimeunderlockdown}}
compared with the 2014 to 2015 UK time-use study, which reported the working-not-from-home time. According to the study, the average daily time  (in minutes) of working not from home is $97.6$ in March/April 2020 and $150.0$ in 2014/2015, thus we estimate   
$L_{t2} = 97.6/150 \approx 0.65$.  


Also according to ONS, the average actual weekly hours of work for full-time workers from Dec 2019 to Feb 2020 was $36.9$ \footnote{ ONS,
Average actual weekly hours of work for full-time workers:\url{https://www.ons.gov.uk/employmentandlabourmarket/peopleinwork/earningsandworkinghours/timeseries/ybuy/lms}} , thereby we set $n_0=36.9$.  According to the equation $n_0^2\theta=1$, we set $\theta=0.00073$.

We follows the setting of some parameters in literature. The mortality rate is set to be  $0.6\%$ from  \cite{eichenbaum2020macroeconomics}. As in  \cite{eichenbaum2020macroeconomics} , we assume that each infected case takes $18$ days on average to either recover or die. Since our model is weekly, we have $\pi_d=0.006\times 7/18,\pi_r=7/18-\pi_d$.  
The reproduction number $R_0$ 
at time $t_1$ in Jan 2020 is around $1.95$ without control measures \footnote{Coronavirus wikipedia: \url{https://en.wikipedia.org/wiki/Coronavirus_disease_2019}}, and between $0.7$ to $1.0$ in April 2020 after the lockdown \footnote{BBC report on R number: \url{https://www.bbc.co.uk/news/health-52677194}}\footnote{The R number in the UK:\url{https://www.gov.uk/guidance/the-r-number-in-the-uk}}, we use the middle point $0.85$ of this range of $R_0$ for the calcuation of $\pi_{t_2}$. Since in classic SIR model, $R_0=\beta/\gamma$ where $\beta$ and $\gamma$ the infected and recovery transmission rate
$$\pi_{t_1}=1.95\times 7/18,\pi_{t_2}=0.85\times 7/18.$$
Given a published  average annual income \footnote{statista: average full time annual earnings in the uk: \url{https://www.statista.com/statistics/1002964/average-full-time-annual-earnings-in-the-uk/}} $30350$ for $52$ weeks, we set $A=15.8172$.  

With all quantities  involved in the three equations for  $ (\pi_{s1},\pi_{s2},\pi_{s3})$, we get 
 solution 
$$\pi_{s1}=1.244887\times 10^{-6},\pi_{s2}=1.0336\times 10^{-4}, \pi_{s3}=0.01759.$$ 

By the value $n_0=36.9$, we take 
$\theta=1/ (36.9)^2$. 

Finally, we copy the value $\phi=0.8$ from   \cite{eichenbaum2020macroeconomics}. 

\section{Numerical Results} 
In this section, we present the result of our numerical experiments under the parameter setting in section 3.2.   
We do experiments to analyze the impact of the optimal lockdown control policy, the policy when different levels of the cost of death are taken into consideration, early exit and late start of the lockdown policy, and finally the smart containment policy. For every experiment, the initial state is $ (S,I,R)= (0.9998,0.0002,0)$ and the time horizon is $100$ weeks.

\subsection{Optimal Lockdown Control}

As Figure 1 (a), (d) (page \pageref{fig1}) shows, if there is no lockdown control, i.e. the lockdown rate is constant $1$ for all time, then under our parameter setting, around $15\%$ of the population will be infected, $0.3\%$ of the population will die and the peak of infection will be above $0.6\%$ at week $50$. Under the optimal lockdown control, the proportion of Infectious people decrease to 
$5.22\times 10^{-5}$ at week $50$, then raises to $2.5\times 10^{-4}$ at week $100$. $0.37\%$ of the population will become infected and $0.0068\%$ of the population will die by week $100$. The optimal lockdown policy reduces the peak of infection by $95.8\%$ and reduces the number of deaths by $97.7\%$.  
The significant life-saving is associated with a recession. Figure 1 (e) (page \pageref{fig1}) shows the aggregate consumption under optimal lockdown policy decreases $20\%$ compares to the no control case at the beginning, but then constantly increases. The average aggregated consumption fall by $6.6\%$ with the optimal lockdown measure. 
In Figure 1 (f) (page \pageref{fig1}), the optimal lockdown rate starts from around $80\%$, then gradually release to above $95\%$, the speed of the increase of the lockdown rate first decreases until around week $50$, then increase until week $100$.  

The increase of the infected proportion is because our model has a finite time horizon, and does not take the consequences after a time horizon of $100$ weeks into consideration. In the beginning, the aggregated consumption under optimal lockdown control is $20\%$ less but becomes $8.2\%$ more than that of no control in the end. The reason that the optimal lockdown control policy did not cause a severe recession might be that in the no control case, susceptible people will cut back their working hours, as well as their consumption as the infected population increases, and in the optimal lockdown control restricted the infected population so that susceptible people won't cut back their consumption as much. We proved in section 2 that the recovered and Infectious people will work as much time as possible in order to maximize their own utility, but the behaviour of susceptible people is not certain. In the parameter setting of our experiments, the susceptible people almost work as much as possible just as the infected and recovered people do, but slightly reduce their working hours from the upper bound of lockdown constrain near the end of the time horizon, this behaviour may due to the increase of infected proportion, which raises the risk of getting infected for susceptible people.  

In general, the optimal lockdown policy saves lives and is more robust in economic recovery, it brings long-term health benefits and economic growth with the cost of a short-term recession.

\subsection{Cost of Death}  

In this subsection, we study how the severity of death regarded by the planners affects the optimal lockdown policy. We set the penalty coefficient of death $\lambda$ in $ (17)$ as $0,10,20,50$, which means the death of $1$ people is regarded as the loss of $0,10,20,50$ recovered people by the planner. When $\lambda=0$, it is the same as the original optimal control model.   

Our results in Figure 2 (page \pageref{fig2}) show that adding a penalty on deaths makes a huge difference, it significantly slows down the increase of the lockdown rate (Figure 2 (f) (page \pageref{fig2})), thus reduces the proportion of deaths in a great extent: $76.7\%,83.0\%,87.2\%$ respectively (Figure 2 (d) (page \pageref{fig2})), and avoid the substantial rise of the infectious population (Figure 2 (a) (page \pageref{fig2})), these are beneficial in terms of the mental impact in the society as low deaths and infection amount release the pressure on both people in the society and the planner. As the penalty coefficient increases, the optimal policy becomes constantly more strict. The relation of the death penalty coefficient and the result optimal control rate is below linear. As Figure 2 (f) (page \pageref{fig2}) shows, despite that optimal lockdown policy with different death penalty coefficient starts with quite different lockdown rates:$0.62,0.56,0.46$ for penalty coefficient $10,20,50$ respectively, they quickly become close. At the end of control, the aggregate consumption, as well as the lockdown rate of optimal policy with the death penalty is extremely close to the one without the death penalty. Compare to the original optimal lockdown policy, there is a slight recession when adding penalty on number of deaths: the average aggregate consumption decreases by $3.3\%,4.5\%,5.5\%$ for penalty coefficient $10,20,50$ respectively (Figure 2 (e) (page \pageref{fig2})).  

Although the lockdown control policy with or without the death penalty becomes close from the middle to the end of the control, in the latter case, the infectious population does not rise as it in the former case. This is because that the lockdown policy with the death penalty suppresses the infectious population to a much lower level than the lockdown policy without the death penalty, thus the infectious population grows slower as the lockdown rate increases.    

In general, considering the cost of deaths leads to a more conservative lockdown control policy, it saves much more lives at the cost of a short-term recession.

\subsection{Cost of Early Ending of Lockdown Control Policy}  

Practically, policymakers may under the intense pressure of economic loss that forces them to end the containment policy in the middle of the pandemic. In this subsection, we discuss the consequences of doing so. As we see in section 4.1, the infected population reaches the bottom at the week $50$, which may seem to be a good time spot to end the lockdown policy.  

Our results in Figure 3 (e) (page \pageref{fig3}) shows that there is an instant bounce of consumption right after the end of lockdown control, but this would cause the instant rise of infectious population (Figure 3 (a) (page \pageref{fig3})), and at the end, the infectious population $72$ times larger than that of week $50$. The burst of infection would result in a recession of $10.8\%$ from the peak at the end (Figure 3 (e) (page \pageref{fig3})).  

So, ending the lockdown policy prematurely may not bring long-term economic benefit and what's worse is, it would result in a substantial additional number of deaths. Therefore we suggest that policymakers avoid terminating the lockdown policy during the pandemic in pursuit of only a short-term economic benefit.

\subsection{Cost of Start the Lockdown Control Policy Late}  

Policymakers could also face the situation that there are things that prevent them from taking the lockdown measure in the early stage of the pandemic.   

Our results Figure 4 (f) (page \pageref{fig4}) show the optimal lockdown policy that starts at week $13$ (around $3$ months later). Compare to the optimal lockdown policy that starts at week $0$ that starts with the lockdown rate $0.8$, the late started optimal lockdown policy starts with a stricter constrain rate of $0.73$. Although the lockdown rate of the late started lockdown policy constantly increases, it is always less than the original optimal lockdown policy. The late start causes a slight stronger recession (Figure 4 (e) (page \pageref{fig4})): the average aggregated consumption reduces $1\%$ and a substantial rise of deaths (Figure 4 (d) (page \pageref{fig4})): the number of deaths rises $84.8\%$ by week $100$.    

In general, It is the earlier the better to start the lockdown control policy, and despite that the late start of lockdown policy brings additional loss, it is much better than applying no containment policy or abandon it too early.

\subsection{Vaccination}  

Vaccination is an effective method of preventing infectious diseases. We now involve vaccination in SIR model. Assume that at each time period, fix amount of susceptible people: $\delta_{v}$ of the starting population get vaccination that could prevent them from getting COVID-19 and assume governments to afford the cost of vaccination for people. Once susceptible people get vaccination, they are regarded as recovered. Thus the objective value of susceptible people become: 

\begin{eqnarray}
 J^s (c^s_\cdot, n^s_\cdot; t, X_t, L_\cdot)&=&u (c^s_t, n^s_t)+\beta(1-\frac{\delta_v}{S_t})\tau_t J^{i*} (t+1, L_\cdot) \nonumber \\
 &&+\beta(1-\frac{\delta_v}{S_t}) (1-\tau_t)J^s (c^s_\cdot, n^s_\cdot; t+1, X_{t+1}, L_\cdot)\nonumber\\
 &&+\beta\frac{\delta_v}{S_t} J^{r*} (t+1, L_\cdot)
 \end{eqnarray}

With the cost of vaccination, denote as $p$, the optimal control problem of policymakers become: 
\begin{equation}
\begin{array}{rl}
\max_{L_\cdot}& J^0 (L_\cdot; t,X_t)=\sum_{t=t_0}^T \beta^{t-t_0} \left [ 
 S_t  u (c^{s*}_t , n^{s*}_t )+ I_t  u (c^{i*}_t , n^{i*}_t )+ R_t  u (c^{r*}_t , n^{r*}_t )- p\delta_v \right]\\
 \end{array} 
\end{equation}
We set $\delta_v=1/104$ in simulation. 
Results on Figure 7(a),(b),(d)(page \pageref{fig7}) shows that vaccination could eliminate the epidemic without a rebound of infection and reduce the number of deaths compare that without vaccination. Figure(e),(f) shows that vaccination reduces the severity of recession and leads to a less strict optimal lockdown control.

\subsection{Smart Lockdown Control Policy}

In the lockdown control policies we studied so far, the government chooses the same lockdown rate for all three kinds of people (susceptible, infectious, and recovered). In this subsection, we consider the smart containment, by which means the policymaker directly chooses working hours for all three kinds of people with the same objective function as previous models.    
There is no need to apply any lockdown on recovered people because their utility reaches the maximum as their working hour is at the maximum and they do not affect the utility or the transition of susceptible and Infectious people. Our results show that in the smart lockdown control policy, Infectious people almost do not work at the beginning, but then the planner gradually increases their working hours as the infected population decreases rapidly, and susceptible people work almost without fear of becoming infected.  
Figure 5 (page \pageref{fig5}) shows that compare to the previous optimal lockdown control policy, the smart lockdown policy is much better, since it reduces the number of deaths to a great extent, and almost avoids the recession because the proportion of Infectious people is extremely small.  
The implement of a smart lockdown control policy requires the planners to know the status of all people and have control over their working hours. In reality, the knowledge of people's status needs measures such as medical testing and rely on the accuracy of testing. Our results suggest that these measures and information that are helpful for taking smart lockdown policy are beneficial for social welfare.

\subsection{View of Reproduction Number}  

The reproduction number ($R_0$) is now a basis for some governments to make decisions in reaction to the pandemic. We present the $R_0$ of lockdown polices in all our experiments in Figure 6 (page \pageref{fig6}).  
The $R_0$ of smart lockdown policy is much smaller than that of all other lockdown policies. The $R_0$ of lockdown policies with the same lockdown rate for all three kinds of people behave similarly to their lockdown rate whereas in the no control case, its $R_0$ decreases constantly and the $R_0$ of smart lockdown policy behaves similar to the lockdown rate of Infectious people. This is because the behaviour of all three kinds of people is in accordance with the lockdown rate in optimal lockdown control policies, while the behaviour of susceptible people varies if there is no control, and in the smart lockdown case, the susceptible and recovered people almost remain the lockdown rate as constant $1$ in the whole control process, thus its $R_0$ behaves in accordance with the lockdown rate of Infectious people. Notice that although the $R_0$ of in the no control case decreases below $1$, and the $R_0$ of lockdown control policies with or without the death penalty increase over $1$, policies with lockdown control are much better than that without control as analyzed in previous subsections. We, therefore, suggest that whether $R_0$ is larger or less than $1$ can not be the only foundation for planners to make judgements or decisions on the current situation.

\section{Conclusion}  

In this paper, we extend the canonical epidemiological model SIR to find an optimal decision making with the aim to balance between economy and people's health. In our model, people in different health statuses take different decisions on their working hours and consumption to maximise their own utility, while policymakers control the lockdown rate to maximise the overall welfare, which leads to a two phases optimisation problem. Several parameters in our model are not straightforward to specify using the common epidemic data for modelling. We develop a novel method of parameter estimation through various additional sources of data. Our results show that lockdown measures could effectively reduce the deaths and infections caused by the COVID-19. There is an inevitable trade-off between the short-term recession, and health problems caused by the pandemic, and how policymakers deal with this could lead to very different decisions. We quantify the trade-off by emphasising the cost of death in the model objective, which enables the optimal lockdown policy to discover a balance between the economic and epidemic outcomes. The timing of starting and ending the lockdown control policy makes much difference in terms of both the economic and epidemic outcomes. So the earlier to start the control, the better the results will be. It is crucial to avoid premature ending of the control. In the analysis of the smart containment policy, the results suggest that additional information about the health status of people is beneficial, as the optimal lockdown control policy will reach much better outcomes if it could be implemented on people with different health status separately. Through comparison of lockdown policies, we suggest that $R_0$ cannot be the only foundation for policy-making.


\begin{figure}
  \centering
  \includegraphics[width=0.9\textwidth]{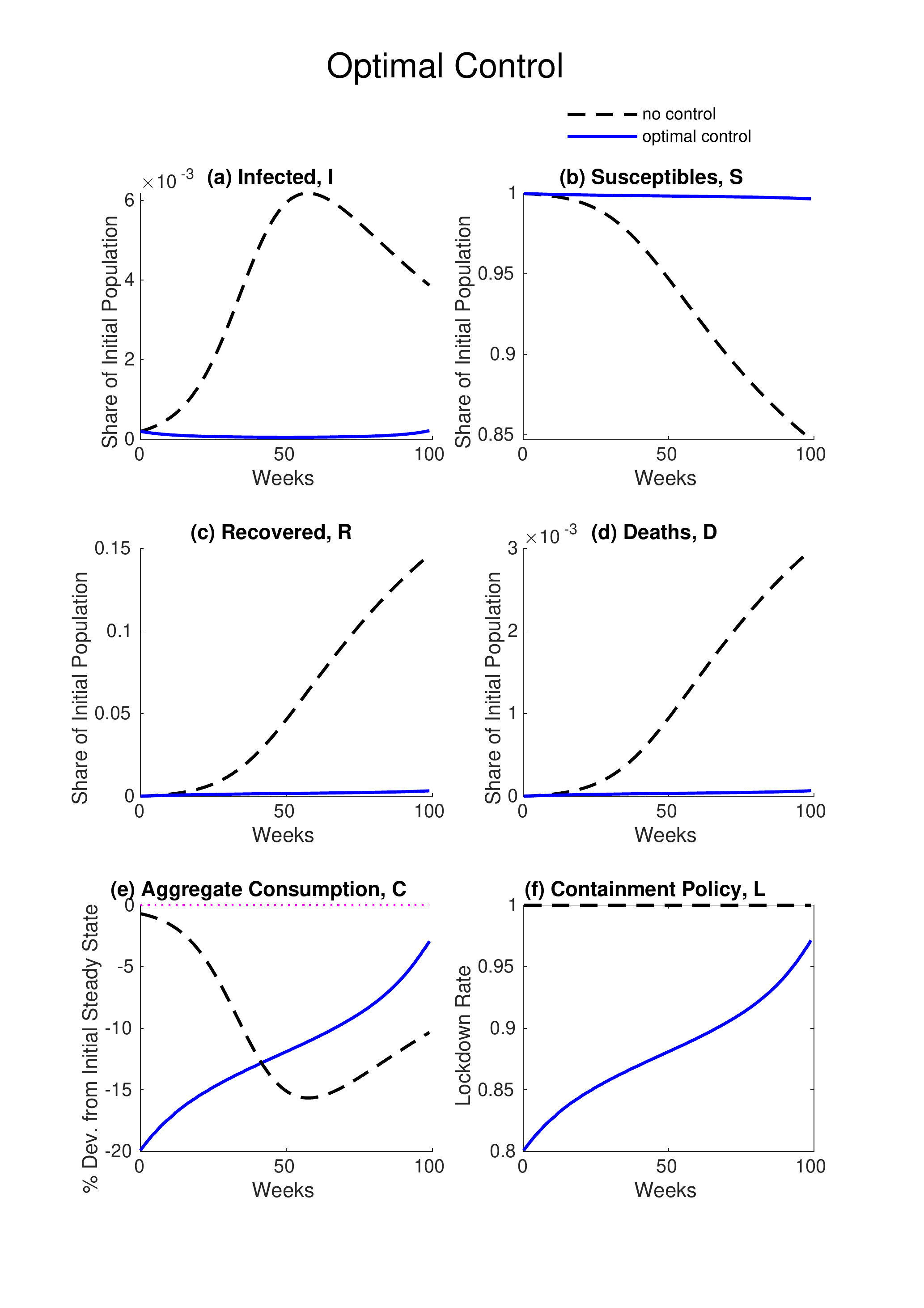}\label{fig1}
\end{figure}


\begin{figure}
  \centering
  \includegraphics[width=0.9\textwidth]{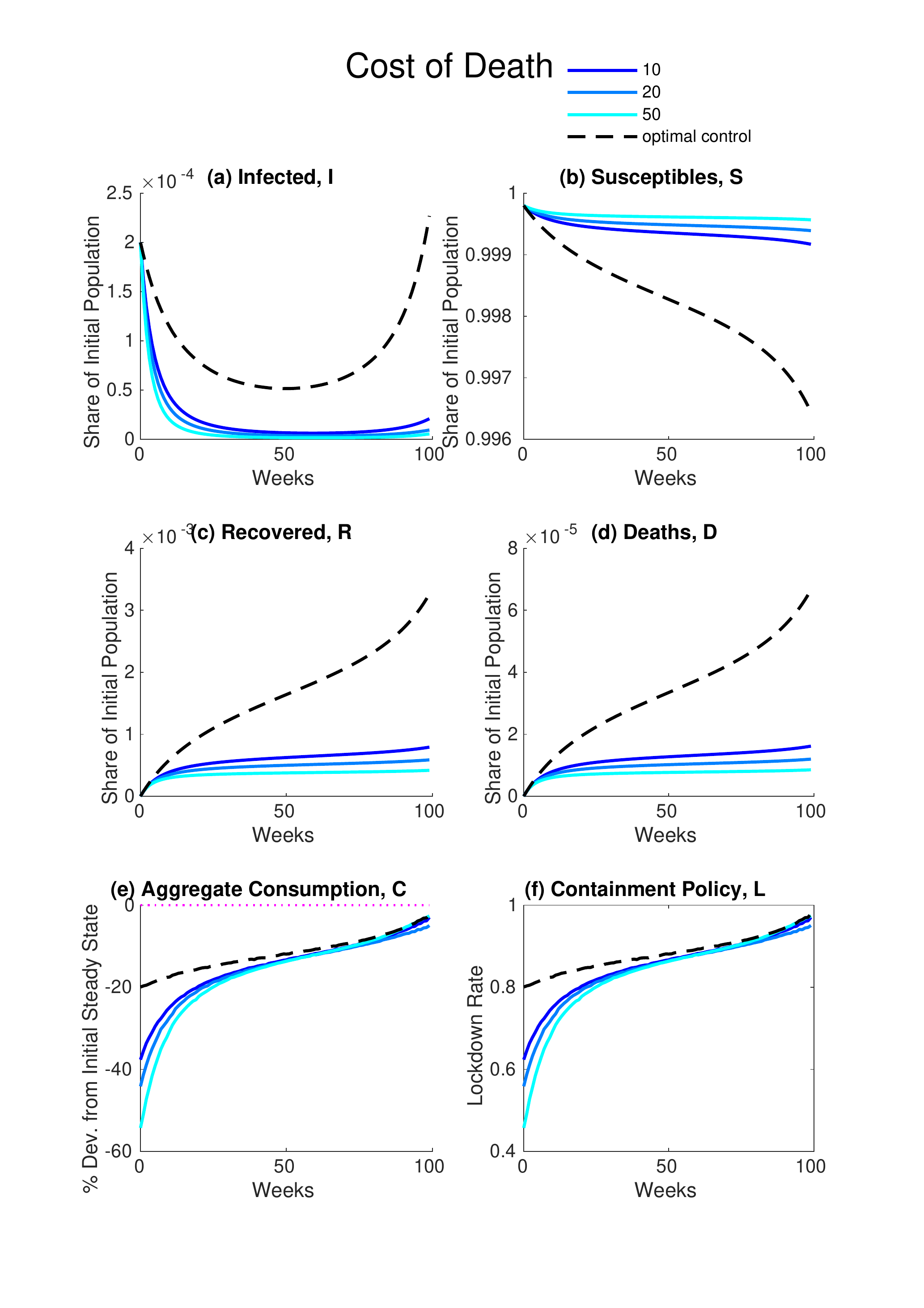}\label{fig2}
\end{figure}


\begin{figure}
  \centering
  \includegraphics[width=0.9\textwidth]{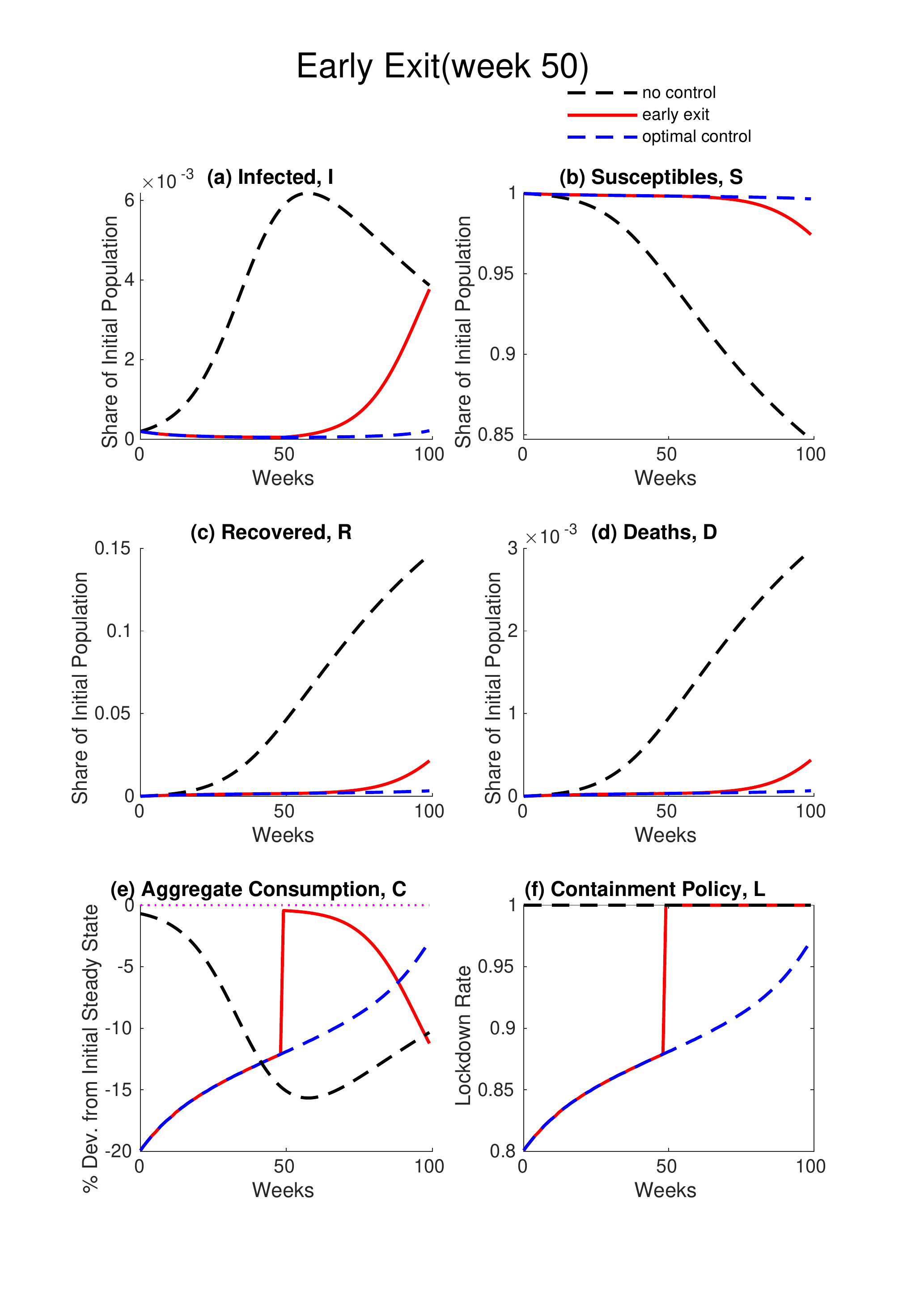}\label{fig3}
\end{figure}


\begin{figure}
  \centering
  \includegraphics[width=0.9\textwidth]{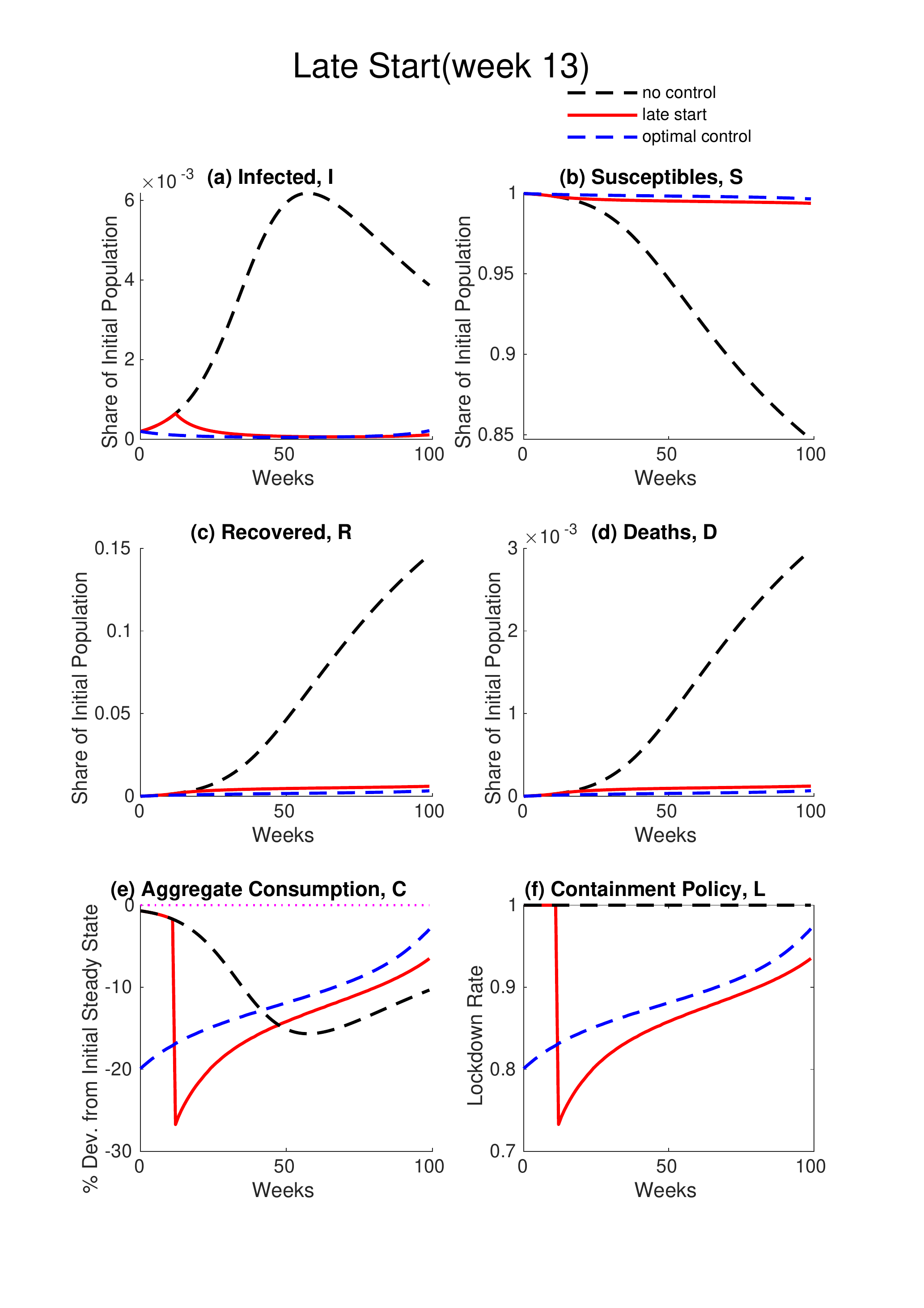}\label{fig4}
\end{figure}


\begin{figure}
  \centering
  \includegraphics[width=0.9\textwidth]{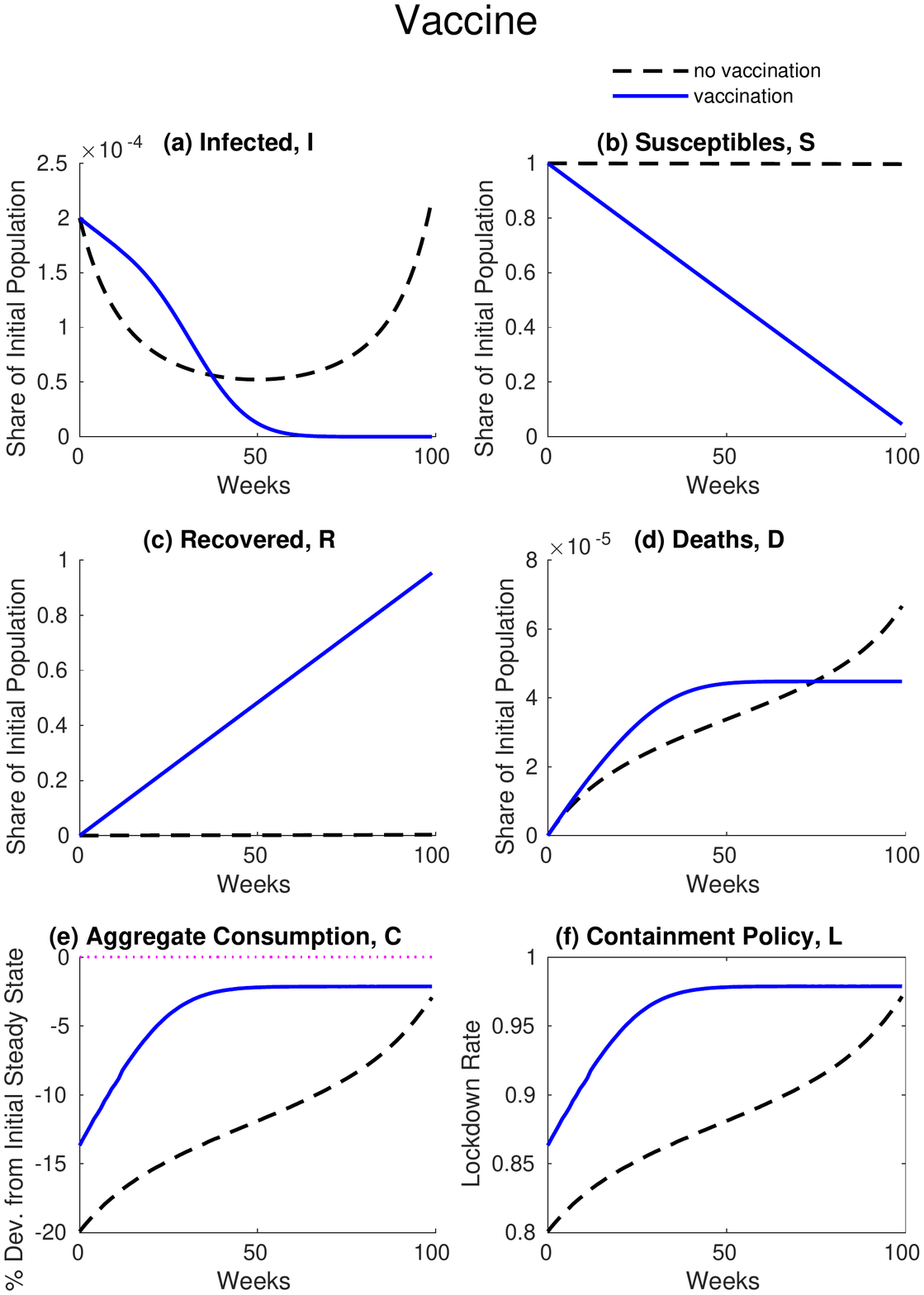}\label{fig5}
\end{figure}


\begin{figure}
  \centering
  \includegraphics[width=0.9\textwidth]{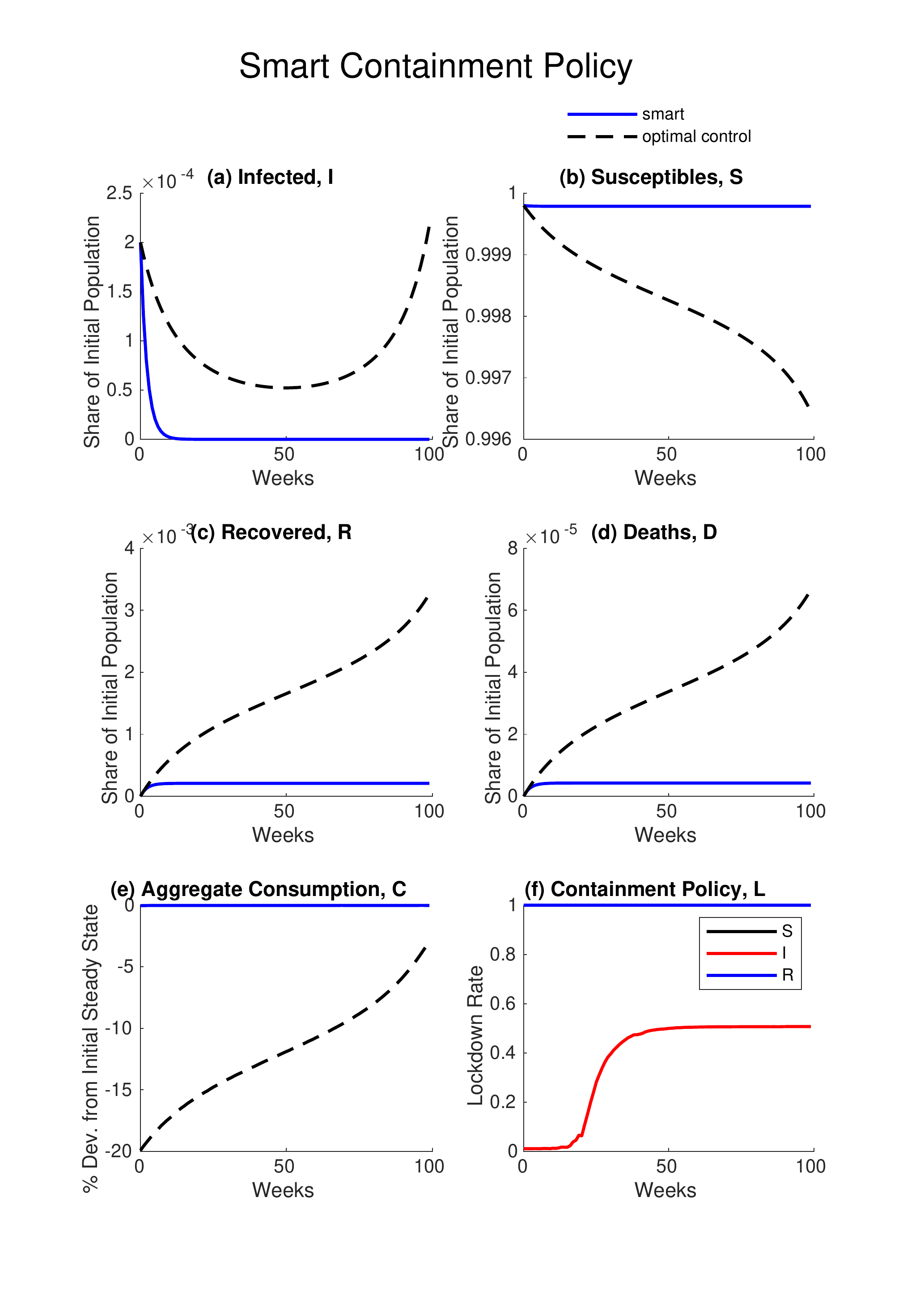}\label{fig6}
\end{figure}


\begin{figure}
  \centering
  \includegraphics[width=0.9\textwidth]{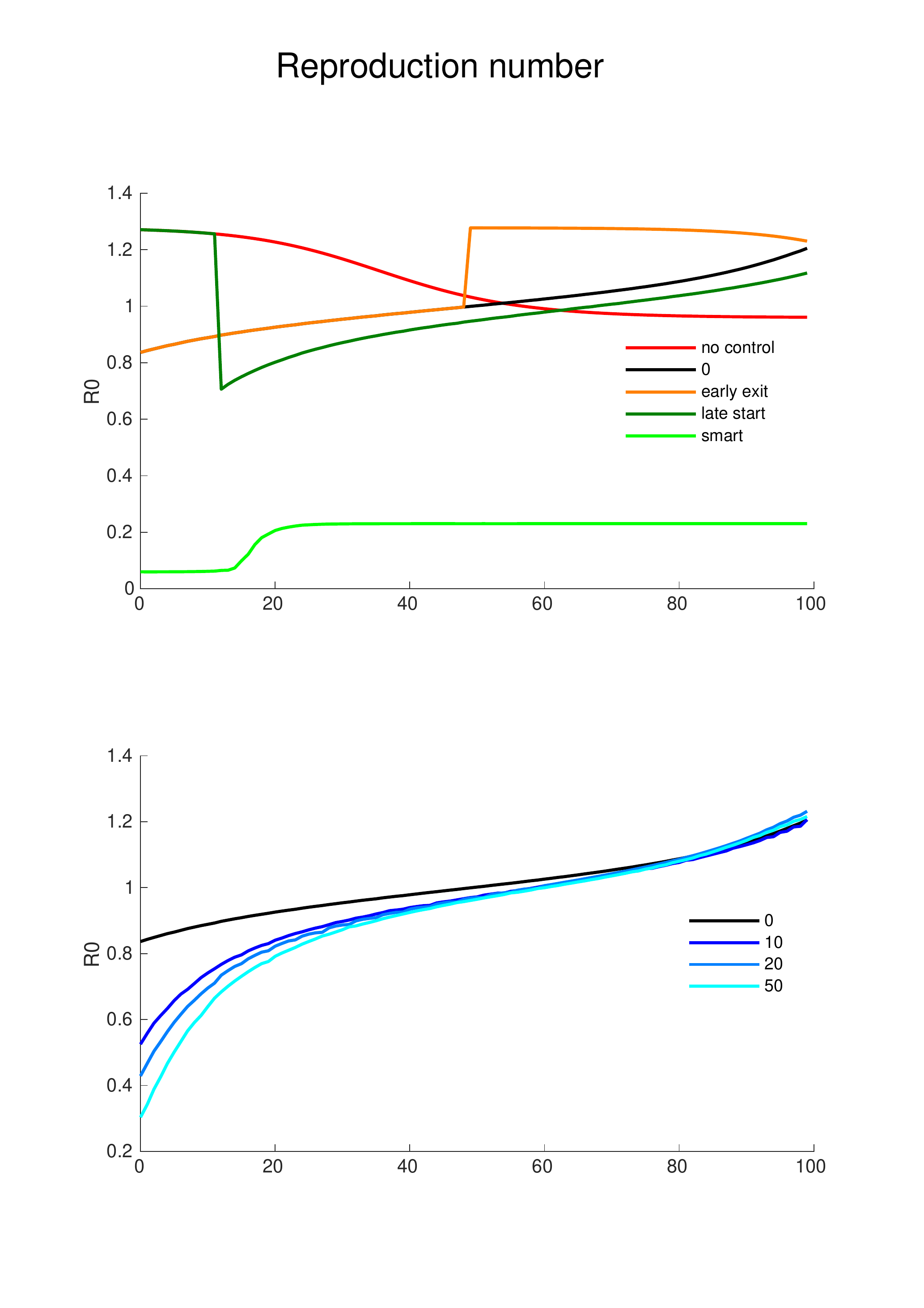}\label{fig7}
\end{figure}

\bibliographystyle{elsarticle-harv}
\bibliography{reference.bib}

\end{document}